\newcommand\wrt{w.r.t\@.}
\newcommand\st{s.t\@.}
\newcommand\Newtheorem[2]{
  \newaliascnt{#1}{theorem} 
  \newtheorem{#1}[#1]{#2}  
  \aliascntresetthe{#1} 
}
\newtheorem{theorem}{Theorem}[section]
\crefname{figure}{figure}{figures}
\renewcommand\cref[1]{\Cref{#1}}
\newcommand\ra{\rightarrow}
\def\lfp{l\@.f\@.p\@.}
\def\gfp{g\@.f\@.p\@.}
\newcommand\ie{i.e\@.}
\newcommand\R{\mathcal R}
\renewcommand\L{\mathcal L}
\newcommand\X{{\cal X}}
\newcommand\Y{{\cal Y}}
\newcommand\BB{{\mathbb B}}
\newcommand\CC{{\mathbb C}}
\newcommand\DD{{\mathbb D}}
\newcommand\EE{{\mathbb E}}
\newcommand\FF{{\mathbb F}}
\newcommand\GG{{\mathbb G}}
\newcommand\HH{{\mathbb H}}
\newcommand\KK{{\mathbb K}}
\newcommand\LL{{\mathbb L}}
\newcommand\PP{{\mathbb P}}
\newcommand\VV{{\mathbb V}}
\newcommand\fv{\textup{fv}}
\newcommand\CT{{\mathcal{C}}}
\newcommand\CP{\CT\P}
\newcommand\lv{\textup{lv}}
\newcommand\at{{\;@\;}}
\renewcommand\P{{\mathcal P}}
\newcommand\Q{{\mathcal Q}}
\def\simplif{~{\Longleftrightarrow}~}
\def\propag{~{\Longrightarrow}~}
\newcommand\default[2]{\ifthenelse{\equal{#1}{}}{#2}{#1}}
\newcommand\rapa[1][\P]{\,{\xrightarrow{{}_{#1}}_a}\,}
\newcommand\rapas[1][\P]{{\xrightarrow{{}_{#1}}^{_*}_{a}}}
\newcommand\estate[3][]{{{\ifthenelse{\equal{#1}{}}{}{\exists {#1}}}\left< {#2};{#3} \right>}}
\newcommand\state[3]{\langle \default{#1}{\emptyset};\default{#2}{\true}
;\default{#3}{\emptyset}
 \rangle}
\newcommand\states{\Sigma}
\newcommand\redex[4]{{#1}@\state{#2}{#3}{#4}}
\newcommand\astate[3]{\left< {#1};{#2};{#3} \right>}
\newcommand\abstractindice{{a}}
\newcommand\aequiv{\equiv_\abstractindice}
\newcommand\chr{\textup{chr}} 
\newcommand\id{\textup{id}}
\newcommand\rapc{{\xrightarrow{{}_{\P}}_c}}
\newcommand\cstate[6]{\left< \! \left< {#1}; {#2};{#3}; {#4} \right> \!\right>_{#6}^{#5}
}
\newcommand\false{\textbf{false}}
\newcommand\true{\textbf{true}}
\newcommand\0{{\state{\emptyset}{\false}{\emptyset}}}
\newcommand\smodel[1]{{\mathcal F}^{\,\CT}_\text{co}({#1})}
\newcommand\rap{\xrightarrow{_\P}}
\newcommand\lmodel[1]{{\mathcal F}^{\,\CT}({#1})}
\newcommand\hstate[3]{\left< {#1};{#2};{#3} \right> }
\newcommand\raph[1][\P]{\xrightarrow{_{#1}}_h}
\newcommand\raphs[1][\P]{\xrightarrow{_{#1}}_h^{_*}}
\newcommand\hybridindice{{h}}
\newcommand\hequiv{\equiv_\hybridindice}
\newcommand\hmodel[1]{{\mathcal F}^{\,\CT}_\text{h}({#1})}
\renewcommand\L{{\mathcal L}}
\newcommand\Bp[1][\Y]{{T^\CT_\P}}
\newcommand\cexists[1]{\exists_{\,\text{-}{#1}}}
\newcommand\prth[1]{{\left( {#1} \right)}}
\newcommand\prea[1][\P]{{\left[ {#1 }\right]_h}}
\newcommand\pree[1][\P]{{\left< {#1 }\right>_h}}
\newcommand\eqclass[2][h]{{\left\llbracket {#2 } \right\rrbracket_{#1}}}
\newcommand\esum[2]{\ifthenelse{\equal{#2}{}}{[#1]}{[#1|#2]}}
\newcommand\ezero{[]}
\newcommand\eone{\ezero^*}
\begin{document}

  \title[(Co-)Inductive semantics for Constraint Handling Rules]%
  {(Co-)Inductive semantics for Constraint Handling Rules%
  }
\author[R\'emy Haemmerl\'e]{R\'emy Haemmerl\'e\\
Technical University of Madrid}

\maketitle

\begin{abstract}
  In this paper, we address the problem of defining 
  a fixpoint semantics for Constraint Handling Rules (CHR) that
  captures the behavior of both simplification and propagation rules
  in a sound and complete way with respect to their declarative
  semantics. 
  Firstly, we show that the logical reading of states with respect to
  a set of simplification rules can be characterized by a least
  fixpoint over the transition system generated by the abstract
  operational semantics of CHR.  Similarly, we demonstrate that the
  logical reading of states with respect to a set of propagation rules
  can be characterized by a greatest fixpoint.
  Then, in order to take advantage of both types of rules without
  losing fixpoint characterization, we present an operational
  semantics with persistent.%  constraints closely related to the one
  We finally establish that this semantics can be characterized by two
  nested fixpoints, and we show the resulting language is an elegant
  framework to program using coinductive reasoning.
\end{abstract}

\begin{keywords}
CHR, coinduction, fixpoint, declarative semantics, persistent constraints.
\end{keywords}

\section{Introduction}
\label{sect:introduction}

Owing to its origins in the tradition of Constraint Logic Programming
(CLP) \cite{JL87popl}, Constraint Handling Rules (CHR)
\cite{Fruehwirth98jlp} feature declarative semantics through direct
interpretation in classical logic.  However, no attempt to provide
fixpoint semantics to the whole language, sound and complete \wrt\
this declarative semantics has succeeded so far. This is particularly
surprising considering that the fixpoint semantics is an important
foundation of the declarative semantics of CLP.  It is perhaps because
CHR is the combination of two inherently distinct kinds of rules that
this formulation is not so simple. On the one hand, the so-called
Constraint Simplification Rules (CSR) replace constraints by simpler
ones while preserving their meaning. On the other hand, the so-called
Constraint Propagation Rules (CPR) add redundant constraints in a
monotonic way.  Even though in the declarative semantics the two
notions merge, one of the main interests of the language comes from
the explicit distinction between the two. Indeed, it is well known
that propagation rules are useful in practice but have to be managed in a
different way from simplification rules to avoid trivial non-termination.
(See for instance explanations by \citeN{Fruehwirth09cambridge} or
\citeN{BRF10iclp}.)

Soundness of the operational semantics of CSR (\ie\ to each derivation
corresponds a deduction) have been proved by
\citeN{Fruehwirth09cambridge}, while completeness (\ie\ to each
deduction corresponds a derivation) has been tackled by
\citeN{AFM99constraints}.  However, it is worth noticing that the
completeness result is limited to terminating programs.  On the other
hand, the accuracy of CPR \wrt\ its classical logic semantics is only
given through its naive translation into CSR.  Since any set of propagation
rules trivially loops when seen as simplification rules, the completeness
result does not apply to CPR.

It is well known that termination captures least fixpoint (\lfp) of
states of a transition system. Quite naturally, non-termination
captures the greatest fixpoint (\gfp). Starting from this observation,
we show in this paper that if they are considered independently, CSR
and CPR can be characterized by a \lfp\ (or inductive) and \gfp\ (or
coinductive) semantics respectively, providing along the way the first
completeness result for CPR. Then, in order to take advantage of both
types of rules without losing fixpoint characterization, we present an
operational semantics $\omega_h$ similar to the one recently proposed
by \citeN{BRF10iclp}. Subsequently we demonstrate that this new
semantics can be characterized by two nested fixpoints and can be
implemented in a simple manner to provide the first logically complete
system (\wrt\ failures) for a segment of CHR with propagations rules.
We show as well this semantics yields an elegant framework for
programming with coinductive reasoning on infinite (or non-well
founded) objects \cite{BM96csli}.

\smallskip

The remainder of this paper is structured as follows:
\Cref{sect:preliminaries} states the syntax of CHR and summarizes several
semantics.  In \cref{sect:fixpoints},
we present two fixpoint semantics for CHR. We show these semantics,
which built over the transition system induced by the abstract
operational semantics of CHR, offer a %sound and complete
characterization of logical reading of queries \wrt\ CSR and CPR,
respectively. %
In \cref{sect:hybrid}, we define semantics with persistent constraints
related to the one recently introduced by \citeN{BRF10iclp}. %
We prove this new operational semantics can be characterized by a
\lfp\ nested within a \gfp, and give an implementation via a
source-to-source transformation.  Finally, in
\cref{sect:applications}, we illustrate the power of the language
before concluding in \cref{sect:conclusion}.

\section{Preliminaries on CHR}
\label{sect:preliminaries}

In this section, we introduce the syntax, the declarative semantics
and two different operational semantics for CHR. In the next sections,
both operational semantics will be used, the former as theoretical
foundation for our different fixpoint semantics, and the latter as a
target to implementation purposes.

\subsection{Syntax}

The formalization of CHR assumes a language of \emph{built-in
  constraints} containing the equality $=$, $\false$, and $\true$ over
some theory $\CT$ and defines
\emph{user-defined constraints} using a different set of predicate
symbols.
In the following, we will denote variables by upper case letters, $X$,
$Y$, $Z$, \dots, %sets of variables by capital letters with an overline $\bar X, \bar Y, \bar Z\dots$,
and (user-defined or built-in) constraints by lowercase letters
$c,d,e\dots$ By a slight abuse of notation, we will confuse
conjunction and multiset of constraints, forget braces around multisets
and use comma for multiset union. We note $\fv(\phi)$ the set
of free variables of any formula $\phi$. %
The notation $\cexists{\bar X} \phi$ denotes the existential closure of
$\phi$ with the exception of variables in $\bar X$, which remain free.
In this paper, we require the non-logical axioms of $\CT$ to be {\em
  coherent formula} (\ie\ formulas of the form $\forall (\CC
\rightarrow \exists \bar Z.\DD)$, where both $\CC$ and $\DD$ stand for
possibly empty conjunctions of built-in constraints).  Constraint
theories verifying such requirements correspond to
Saraswat's {\em simple constraints systems} \citeNN{SRP91popl}.

A {\em CHR program} is a finite set of
 eponymous
rules of the form:
\[ r \at \KK \backslash \HH \simplif \GG \mid \CC, \BB \]
where $\KK$ (the {\em kept head}), $\HH$ (the {\em removed head}) are
multisets of user-defined constraints
respectively, $\GG$ (the {\em guard}) is a conjunction of built-in constraints%
, $\CC$ is a conjunction of built-in constraints, $\BB$ is
a multiset of user-defined constraints and, $r$ (the {\em rule name}) is an arbitrary
identifier assumed unique in the program.
 Rules, where both heads are empty, are
prohibited. 
 Empty kept-head can be omitted together with
the symbol $\backslash$.
 The {\em local variables} of rule are the variables occurring in the
  guard and in the body but not in the head that is $\lv(r) \!=\!
  \fv(\GG , \CC, \BB) \setminus \fv(\KK, \HH)$.
CHR rules are divided into two classes:
{\em simplification rules} if the removed head is non empty and {\em
  propagation rules} otherwise. Propagation rules can be written using
the alternative syntax: \[
r \at  \KK \propag \GG \mid \CC, \BB
\]

A {\em CHR state} is a tuple $\state{\CC}{\EE}{\bar X}$, where
$\CC$ (the {\em CHR store}) is a multiset of CHR constraints%
, $\EE$ (the {\em built-in store})
is a conjunction of built-in constraints%
, and $X$ (the {\em global variables}) is a set of variables.
In the following, $\states$ will denote the set of states and
$\states_b$ the set of {\em answers} % purely built-in states
(\ie\ states of the form $\state{\emptyset}{\CC}{\bar X}$).  A state
is {\em consistent} if its built-in store is satisfiable within $\CT$
(\ie\ there exists an interpretation of $\CT$ which is a model of
$\exists \CC$), {\em inconsistent} otherwise.

\subsection{Declarative semantics}
\label{subsection:preliminaries:declarative}

We state now the declarative semantics of CHR. The {\em logical
  reading} of a rule and a state is as follows:
\begin{align*}
&\text{Rule:} & &  
   \KK \backslash \HH \simplif \GG \mid \CC, \BB &&
  \forall \prth{\prth{\KK \wedge \GG} \rightarrow \prth{\HH
      \leftrightarrow \cexists{\fv(\KK, \HH)}\prth{\GG \wedge \CC
        \wedge \BB}}}\\
&\text{State:} && \state{\CC}{\EE}{\bar X} &&  \cexists{\bar X}\prth{\CC
    \wedge \EE}
\end{align*}
$\CP$, the {\em logical reading} of a program $\P$ within a constraint
theory $\CT$ is the conjunction of the logical readings of the rules
of $\P$ with the constraint theory $\CT$.

\subsection{Equivalence-base operational semantics}
\label{subsection:preliminaries:abstract}

Here, we recall the {\em equivalence-based} operational semantics
$\omega_e$ of \citeN{RBF09chr}. It is similar to the {\em very abstract}
semantics $\omega_a$ of \citeN{Fruehwirth09cambridge}, the most
general operational semantics of CHR. We prefer the former because it
includes an explicit notion of equivalence, that will simplify many
formulations.  Because this is the most abstract operational semantics we
consider in this paper, we will refer to it as the {\em abstract
  (operational) semantics}. For the sake of generality, we present it in
a parametric form, according to some sound equivalence relation.

\smallskip

We will say that an equivalence relation $\equiv_i$ is {\em
  (logically) sound} if two states equivalent \wrt\ $\equiv_i$ have
logically equivalent readings in the theory $\CT$.  The equivalence
class of some state $\sigma$ by $\equiv_i$ will be noted $\llbracket
\sigma \rrbracket_i$.  For a given program $\P$ and a sound
equivalence $\equiv_i$, %the syntactical transition, $\rap$, and
the {\em $\equiv_i$-transition relation}, noted $\rap_i$, is the least
relation satisfying the following rules:
\[
\frac{
\prth{r \at \KK \backslash \HH \simplif \GG | \CC, \BB}  \in \P\rho  \quad 
\lv(r) \cap \bar X = \emptyset 
}{ %
\state{\KK, \HH, \DD}{ \GG \wedge  \EE}{\bar X}
 \rap_i %
\state{\CC, \KK, \DD}{ \BB  \wedge  \GG \wedge  \EE}{\bar X}
}
\quad 
\frac{\sigma_1 \equiv_i \sigma_1' \quad \sigma_1' \rap_i\sigma_2' \quad \sigma_2' \equiv_i \sigma_2}
{\sigma_1 \rap_i \sigma_2}
\]
where $\rho$ is a renaming.  If such transition is possible with
$\HH=\emptyset$, we will say that the tuple $\redex{r}{\KK}{\GG\wedge
  \EE}{\bar X}$ is a {\em propagation redex} for the state $\sigma$.
The transitive closure of the relation $(\rap_i \cup \equiv_i)$ is
denoted by $\rap_i^*$.  A {\em $\equiv_i$-derivation} is a finite or
infinite sequence of the form $\sigma_1 \rap_i \dots \rap_i \sigma_n
\rap_i \dots$ %
A $\equiv_i$-derivation is {\em consistent} if so are all the states
constituting it. %
A $\equiv_i$-transition is {\em confluent} if whenever $\sigma
\rap_i^* \sigma_1$ and $\sigma \rap_i^* \sigma_2$ hold, there exists a
state $\sigma'$ such that $\sigma_1 \rap_i^* \sigma'$ and $\sigma_2
\rap_i^* \sigma'$ hold as well.

\label{definition:preliminaries:abstract:equivalence}
The {\em abstract equivalence} is the least equivalence $\aequiv$ defined
over $\states$ verifying:
   \begin{enumerate}
  \item \label{preliminaries:abstract:equivalence:equality}
 $\state{\CC}{Y \!=\! t \wedge \DD}{\bar X} \aequiv \state{\GG[Y
      \backslash t]}{Y \!=\! t \wedge \DD}{\bar X}$
  \item \label{preliminaries:abstract:equivalence:inconsistent}
$\state{\CC}{\false}{\bar X}\aequiv \state{\DD}{\false}{\bar X}$ 
 \item  \label{preliminaries:abstract:equivalence:transformation}
 $\state{\CC}{\DD}{\bar X} \aequiv \state{\CC}{\EE}{\bar X}$ if $\CT \vDash \cexists{\fv\prth{\CC,\bar X}}\prth{
    \DD} \leftrightarrow \cexists{\fv\prth{\CC,\bar X}}\prth{\EE}$
  \item \label{preliminaries:abstract:equivalence:omission}
$\state{\CC}{\EE}{\bar X} \aequiv
  \state{\CC}{\EE}{\{Y\} \cup \bar X}$ where $Y \notin \fv(\CC, \EE)$.
  \end{enumerate}
  Note that this equivalence is logically sound 
  \cite{RBF09chr}.
For a given program $\P$, the {\em abstract transition systems} is
defined as the tuple $(\Sigma, \rapa)$.

\subsection{Concrete operational  semantics} 

This section presents the operational semantics $\omega_p$ of
\citeN{DKSD07ppdp}. In this framework rules are annotated with
explicit priorities that reduce the non-determinism in the choice of
the rule to apply. As initially proposed by \citeN{Abdennadher97cp},
this semantics includes a partial control that prevents the trivial
looping of propagation rules by restricting their firing only once on
same instances. By opposition to the abstract semantics,
we will call it {\em concrete (operational) semantics}.

\smallskip

  An {\em identified constraint} is a pair noted $c\#i$, associating
  a CHR constraint $c$ with an integer $i$. For any identified constraints,
  we define the functions $\chr(c\#i)=c$ and $\id(c\#i)=i$, and extend
  them  to sequences and sets of identified constraints.  A {\em
    token} is a tuple $(r, \bar \i)$, where $r$ is a rule name and
  $\bar \i$ is a sequence of integers. 
  A {\em concrete CHR state} is a tuple of the form
  $\cstate{\CC}{\DD}{\EE}{T}{\bar X}{n}$ where $\CC$ is a multiset of CHR
  and built-in constraints, $\DD$ is a multiset of identified
  constraints, $\EE$ is a conjunction of built-in constraints, $T$ is
  a set of tokens and $n$ is an integer.  We assume moreover that the
  identifier of each identified constraints in the CHR store is unique
  and smaller than $n$.  For any program $\P$, the {\em concrete
    transition} relation, $\rapc$, is defined as following:
  \begin{description}
  \item[\textbf{Solve}] 
$\cstate{c, \CC}{\DD}{\EE}{T}{\bar X}{n} \rapc
  \cstate{\CC}{\DD}{c \wedge \EE}{T}{\bar X}{n} $\\
if  $c$ is a built-in constraint and 
$\CT \! \vDash \forall((c \wedge \BB) \leftrightarrow \BB')$.
\item[\textbf{Introduce}] 
$\cstate{c, \CC}{\DD}{\EE}{T}{\bar X}{n} \rapc
  \cstate{\CC}{c\#n, \DD}{\EE}{T}{\bar X}{n\!+\!1} $\\
if  $c$ is  a CHR constraint.
\item[\textbf{Apply}] $\cstate{\emptyset}{\KK, \HH, \EE}{\CC}{T}{\bar X}{n} \rapc
  \cstate{\BB, \GG}{\KK, \DD}{\CC \wedge \theta}{t \cup T }{\bar X}{n} $\\
  if $(p :: r \at \KK' \backslash \HH' \Leftrightarrow \GG \mid \BB)$
  is a rule in $\P$ of priority $p$ renamed with fresh variables,
  $\theta$ is a substitution such that $\chr(\KK) = \KK'\theta$,
  $\chr(\HH) = \HH'\theta$, $t = (r, \id(\KK, \HH))$, $t \notin T$,
  $\CC$ is satisfiable within $\CT$, and $\CT \vDash \forall(\CC
  \rightarrow \exists( \theta \wedge G))$. Furthermore, no rule of
  priority bigger than $p$ exists for which the above conditions hold.
\end{description}

\section{Transition system semantics for pure CSR and CPR}
\label{sect:fixpoints}

In this section, we propose a fixpoint semantics for both CSR and CPR
programs. We call it transition system semantics because it is defined
as a fixpoint over the abstract transition system, built in a way
similar to $\mu$-calculus formula \cite{CGP00mit}.
  The proofs of this section are only sketched. Detailed versions can
  be found in a technical report~\cite{Haemmerle11clip}.

\smallskip

Before formally introducing the semantics, we recall some standard
notation and results about fixpoints in an arbitrary complete lattice
$(\L, \supset, \cap, \cup, \top, \bot)$.
\footnote{For more details
  about fixpoints, one can refer, for example, to
  \citeANP{Lloyd87springer}'s Book \citeNN{Lloyd87springer}.}.
  A function $f : \L \ra \L$ is {\em monotonic} if $f(\X)
  \supset f(\Y)$ whenever $\X \supset \Y$.  An element $\X \in \L$ is
  a {\em fixpoint} for $f:\L \ra \L$ if $f(\X) =\X$. The {\em least
    fixpoint} (resp\@. {\em the greastest fixpoint}) of $f$ is its
  fixpoint $\X$ satisfying $\Y \supset \X$ (resp\@.  $\Y \subset \X$)
  whenever $\Y$ is a fixpoint for $f$. It is denoted by $\mu \X.f(\X)$
  (resp\@. $\nu \X.f(\X)$).
 \citeANP{Tarski55}'s  \citeNN{Tarski55} celebrated
  fixpoint theorem  ensures that monotonic functions have both a least
  and a greatest fixpoint.

\subsection{Inductive semantics for
  CSR}\label{sect:fixpoint:transition}

In this section, we give a first fixpoint semantics limited to CSR. It
is call inductive, since it is defined as a lfp.

\begin{definition}[Inductive transition system semantics for CSR]
\label{definition:existential:cause:operator}
For a given program $\P$ and an given sound equivalence relation
$\equiv_i$, the {\em existential immediate cause operator}  $\left<
  \P \right>_i:2^\Sigma \ra 2^\Sigma$ is defined as:
 \[
 \left< \P \right>_i(\X) = \{ \sigma \in \Sigma \mid \text{there
   exists } \sigma' \in \Sigma \text { such that } \sigma \rap_i
 \sigma' \text{ and } \sigma' \in \X\}% \\
 \]
 The {\em inductive (transition system) semantics} of a CSR program
 $\Q$ is the set:
\[
\lmodel{\Q} =
    \mu \mathcal X. \prth{\left<\Q \right>_a\prth{\mathcal X}
      \cup \prth{\Sigma_{b} \setminus \llbracket\0\rrbracket_a }}
\]
\end{definition}

The existential immediate cause operator being clearly monotonic,
Tarski's theorem ensures the inductive semantics of a CSR program is
well defined. For a given program $\P$, $\lmodel{\P}$ is exactly the
set of states that can be rewritten by $\P$ to a consistent
answer. Remark that since answers cannot be rewritten by any program,
any state in $\lmodel{\P}$ has at least one terminating
derivation. 

\begin{example}
  \label{example:fixpoint:inductive}
Consider the program $\P_1$ consisting of the two following rules:
\begin{align*}
  a, a \simplif & \true & b \simplif& b & c \simplif& \false 
\end{align*}
We can pick the sets of consistent states of the respective form
\(\state{a^{2i}}{\CC}{\bar X}\) and \(\state{a^{2 i}, b^j}{\CC}{\bar
  X}\), where $d^n$ denotes $n$ copies of a constraint $d$.  Both
sets are fixpoints of $\lambda \mathcal X. \prth{\left<\P_1
  \right>_a\prth{\mathcal X} \cup \prth{\Sigma_{b} \setminus
    \llbracket\0\rrbracket_a }}$, but only the former is the least
fixpoint. Note, that no state of the form \(\state{a^{2i\text{+}1},
  \EE}{\CC}{\bar X}\) or \(\state{c, \EE}{\CC}{\bar X}\) are in such
fixpoints.
\end{example}

We next present a theorem that uses fixpoints semantics to reformulate
results on CSR logical semantics
\cite{Fruehwirth98jlp,AFM99constraints}.  It says that a state that
has at least one answer is in the inductive semantics of a confluent
CSR program if and only if its logical reading is satisfiable within
the theory $\CP$.
Notice that because in the context of this paper, we do not require
$\CT$ to be ground complete, we have to content ourselves with
satisfiability instead of validity.  Nonetheless, we will see in
\cref{sect:applications}, that satisfiability is particularly useful
to express coinductive definitions such as bisimulation.

\begin{theorem}%[Soundness and completeness of inductive semantics for CSR]
\label{theorem:fixpoint:csr}
Let $\P$ be program such that $\rapa$ is confluent.  Let
$\state{\DD}{\EE}{\bar X}$ be a state having at least one answer. We have
\[
 \state{\DD}{\EE}{\bar X} \in \lmodel{\P} %
 \text{ if and only if } %
\exists \prth{\DD \wedge \EE} \text{ is satisfiable within } \CP.
\]
\end{theorem}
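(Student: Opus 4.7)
The plan is to prove both implications separately, pivoting on a single central lemma: the transition relation $\rapa$ is logically sound \wrt\ $\CP$, i.e., whenever $\sigma \rapa \sigma'$ we have $\CP \vDash \sigma \leftrightarrow \sigma'$ (identifying each state with its logical reading). This lemma reduces to combining the logical reading of the applied simplification rule $r \at \KK \backslash \HH \simplif \GG \mid \CC, \BB$ with the assumed logical soundness of $\equiv_a$, and it extends to $\rapa^*$ by transitivity.

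For the \emph{only if} direction, I would induct on the ordinal approximants of the least fixpoint, $\mathcal X_0 = \emptyset$ and $\mathcal X_{n+1} = \langle \P \rangle_a(\mathcal X_n) \cup (\Sigma_b \setminus \llbracket \0 \rrbracket_a)$, so that $\lmodel{\P} = \bigcup_n \mathcal X_n$. The base case $n = 1$ is immediate: elements of $\Sigma_b \setminus \llbracket \0 \rrbracket_a$ are by definition consistent answers, whose reading is satisfiable in $\CT$ and hence in $\CP$. For the inductive step, a state $\sigma \in \mathcal X_{n+1} \setminus \mathcal X_n$ admits some $\sigma \rapa \sigma'$ with $\sigma' \in \mathcal X_n$; the induction hypothesis makes $\sigma'$ satisfiable, and the soundness lemma transports satisfiability back to $\sigma$.

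For the \emph{if} direction, let $\sigma_0 = \state{\DD}{\EE}{\bar X}$ and invoke the hypothesis to fix a derivation $\sigma_0 \rapa \sigma_1 \rapa \cdots \rapa \sigma_n$ with $\sigma_n \in \Sigma_b$. Iterating soundness along this chain gives $\CP \vDash \sigma_0 \leftrightarrow \sigma_n$, so satisfiability of $\sigma_0$ forces $\sigma_n$ to be a \emph{consistent} answer, placing it in $\Sigma_b \setminus \llbracket \0 \rrbracket_a \subseteq \lmodel{\P}$. A straightforward backward induction along the chain then propagates membership: since $\lmodel{\P}$ is a fixpoint of $\lambda \mathcal X.\langle \P \rangle_a(\mathcal X) \cup (\Sigma_b \setminus \llbracket \0 \rrbracket_a)$, it is closed under $\langle \P \rangle_a$, and each $\sigma_{i-1} \rapa \sigma_i$ with $\sigma_i \in \lmodel{\P}$ forces $\sigma_{i-1} \in \lmodel{\P}$, reaching $\sigma_0$.

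The main obstacle I anticipate is the soundness lemma itself. Its delicate point is reconciling the renaming $\rho$ of the rule and its local variables $\lv(r)$ with the existential closure $\cexists{\bar X}$ in the logical reading of the two states: the locality condition $\lv(r) \cap \bar X = \emptyset$ from the definition of $\rapa$ is precisely what permits commuting quantifiers past the body constraints and then using the equivalence $\HH \leftrightarrow \cexists{\fv(\KK,\HH)}(\GG \wedge \CC \wedge \BB)$ supplied by $\CP$ under assumption $\KK \wedge \GG$. Confluence of $\rapa$ does not enter either logical argument directly; it is needed only to ensure that the statement ``has at least one answer'' behaves robustly under the non-determinism of rule choice, so that the derivation chosen in the backward direction is a legitimate witness rather than an artefact of a branch disjoint from the ones realizing satisfiability.
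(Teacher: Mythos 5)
Your ``if'' direction is essentially fine and even a little more self-contained than the paper's: from the hypothesized derivation to some answer and the step-wise soundness lemma $\CP \vDash \forall(\sigma \leftrightarrow \sigma')$, any model of $\CP$ satisfying $\exists(\DD \wedge \EE)$ must satisfy the reading of that answer, so the answer is consistent, and the backward closure of $\lmodel{\P}$ under $\left<\P\right>_a$ finishes the argument. The genuine gap is in the ``only if'' direction, at the very first step of your induction: the claim that the reading of a consistent answer is ``satisfiable in $\CT$ and hence in $\CP$'' does not follow. $\CP$ is $\CT$ \emph{plus} the biconditional readings of the rules, and those extra axioms can destroy satisfiability or make $\CP$ outright inconsistent. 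Take $\P = \{p \simplif \true,\; p \simplif \false\}$: then $\CP$ contains both $p \leftrightarrow \true$ and $p \leftrightarrow \false$, so \emph{nothing} is satisfiable within $\CP$, yet $\state{p}{\true}{\emptyset} \rapa \1$ and your induction would ``prove'' its reading satisfiable. That program is not confluent, which is exactly the point: contrary to your closing remark, confluence is not bookkeeping for the non-determinism of rule choice; it is the hypothesis that guarantees $\CP$ has a model in which consistent answers hold, and that content cannot be extracted from the soundness lemma alone.

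This is also where your route diverges from the paper's. The paper reduces the statement to the contrapositive ``$\CP \vDash \neg\exists(\DD \wedge \EE)$ \iff\ $\state{\DD}{\EE}{\bar X} \rapas \0$'' and discharges the two directions by citing both the soundness lemma \emph{and} the completeness theorem for CSR (Lemma 3.20 and Theorem 3.25 of Fr\"uhwirth's book); the latter is precisely where confluence and the existence of an answer are consumed. To repair your proof you must either invoke that completeness result (or the Abdennadher--Fr\"uhwirth--Meuss consistency theorem for confluent programs) to justify your base case, or reprove the underlying model construction for confluent programs yourself --- which is the genuinely hard half of the theorem and is entirely absent from the proposal.
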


\begin{proof}[Proof sketch]
  As we have said previously, $\lmodel{\P}$ is the set of states
  that can be rewritten to a consistent answer. Hence it is sufficient
  to prove:
\[
\astate{\EE}{\CC}{\bar X}  \rapas[\P] \astate{\emptyset}{\DD}{\bar Y} \text{ with  } \CP\nvDash \neg \exists \prth{\DD}
  \text{ if and only if }
\CP\nvDash \neg \exists \prth{\EE \wedge \CC}
 \]
or equivalently, the contrapositive:
\[
\CP \vDash \neg \exists \prth{\EE \wedge \CC} \text{ if and only if } \astate{\EE}{\CC}{\bar X}  \rapas[\P] \0  
\]
``If'' and ``only if'' directions are respective corollaries of
soundness and completeness for CHR (Lemma 3.20 and Theorem 3.25 in
\citeANP{Fruehwirth09cambridge}'s book
\citeNN{Fruehwirth09cambridge}).
\end{proof}

Our inductive semantics for CSR has strong connections with the
fixpoint semantics of \citeN{GM09tocl}. In contrast to ours, this
semantics focuses on input/output behaviour and is not formally
related to logical semantics, although it is constructed in similar
way as a \lfp\ over the abstract transition system. However, because
it does not distinguish propagation from simplification rules, this
semantics cannot characterize reasonable programs using
propagations% such as the conventional partial order
. Indeed, it has been later extended to
handle propagation rules by adding into the states an explicit token store
{\em \`a la} \citeN{Abdennadher97cp} in order to remember the
propagation history \cite{GMT08chr}. Nonetheless, such an extension
leads to a quite complicated model which is moreover incomplete
\wrt\ logical semantics.

\subsection{Coinductive semantics for CPR}
\label{sect:fixpoint:coinductive}

We continue by giving a similar characterization for CPR.  This
semantics is defined by the {\gfp} of a universal version of the cause
operator presented in \cref{definition:existential:cause:operator}.
Hence we call it coinductive.

\begin{definition}[Coinductive transition system semantics for CPR]
  For a given program $\P$ and an given sound equivalence relation
  $\equiv_i$, the {\em universal (immediate) cause operator} $\left[ \P
  \right]_i:2^\Sigma \ra 2^\Sigma$ is defined as:
  \begin{align*}
    \left[ \P \right]_i(\X) = \,&\{ \sigma \in \Sigma \mid \text{for all } \sigma' \in \Sigma,  \sigma \rap_i \sigma' \text{ implies }   \sigma' \in \X\}
  \end{align*}
  The {\em coinductive (transition system) semantics}
  of a CPR program  $\Q$ is the set:
 \[\smodel{\Q} =
 \nu \mathcal X. \prth{\left[ \Q\right]_a \prth{\X} \cap \prth{\Sigma \setminus \llbracket \0 \rrbracket_a \!}}
\]
\end{definition}

Note that contrary to the inductive semantics, the coinductive one is
not just a reformulation of the existing semantics. Indeed the
universal essence of the operator $\left[ \Q \right]_i$ conveys that
the meaning we give to CPR states relies on all of its derivations,
whereas the existential essence of the operator $\left< \Q \right>_i$
makes explicit the fact that the classical meaning states dependent on
existence of a successful derivation. This semantic subtlety is
fundamental for CPR completeness
(\cref{proposition:cpr:completeness}).

As it is the case for $\left< \Q \right>_i$, the operator $\left[ \Q
\right]_i$ is obviously monotonic. Our semantics is therefore well
defined.  Notice, that $\smodel{\Q}$ is precisely the set of states
that cannot be rewritten to an inconsistent states.  As illustrated by
the following example, states belonging to $\smodel{\Q}$ have in
general only non-terminating derivations \wrt\ the abstract
operational semantics.

\begin{example}
  \label{example:fixpoint:coinductive}
  Let $\CT$ be the usual constraints
  theory% providing usual constraints
  over integers and $\P_2$ be the program:
\begin{align*}
  q(X) \propag & q(X+1) & q(0) \propag& \false 
\end{align*}
The greatest fixpoint of $\lambda \X. \prth{\left[ \P_2\right]_a
  \prth{\X} \cap \prth{\Sigma \setminus \llbracket \0 \rrbracket_a
    \!}}$ is the set of consistent states that does not contain a CHR
constraint $p(X)$ where $X$ is negative or null. Note the empty set is
also a fixpoint but not the greatest.  Note that states such as
$\state{q(1)}{\true}{}$, which are in the greatest fixpoint, have only
infinite derivations.
\end{example}

We give next a theorem that states the accuracy of the
coinductive semantics \wrt\ the logical reading of CPR.
Remark that for the completeness direction, we have to ensure that a
sufficient number of constraints is provided for launching each rule
of the derivation.  To state the theorem we assume the following
notation ($n \cdot \BB$ stand for the scalar product of the multiset
$\BB$ by $n$):
\[
\P^n = \{  (r \at \KK %\backslash \HH  \simplif 
\propag \GG  \mid  \CC,  n \cdot \BB) \mid   
(r \at \KK %\backslash \HH \simplif 
\propag \GG \mid  \CC, \BB) \in \P \}
\]%

\begin{theorem}[Soundness and completeness of coinductive semantics for CPR]
  \label{theorem:fixpoint:cpr}
  Let $\P$ be a CPR program and $n$ be some integer greater 
  than the maximal number of constraints occurring in the head of any 
  rule of $\P$. We have: 
\[
 \astate{n  \cdot  \EE}{\CC}{\bar X}  \in \smodel{\P^n}
  \text{ if and only if }
\prth{\EE \wedge \CC} \text{ is satisfiable within } \CP.
\]
\end{theorem}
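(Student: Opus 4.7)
The plan is to first unfold $\smodel{\P^n}$ using the characterization noted immediately after its definition: $\sigma \in \smodel{\P^n}$ iff no derivation $\sigma \rapas[\P^n] \sigma'$ reaches an inconsistent state. Thus, by contraposition, the theorem is equivalent to
\[
\astate{n\cdot\EE}{\CC}{\bar X} \rapas[\P^n] \0
\text{ if and only if } \CP \vDash \neg \exists(\EE\wedge\CC).
\]
Note that since we identify conjunction with multiset of constraints, the logical reading of $\P^n$ coincides with $\CP$, so $\CP$ may safely be used throughout. The direction ``only if'' (soundness) follows by induction on derivation length from logical soundness of the abstract operational semantics: every $\rapa$-step preserves the logical reading of its state modulo $\CP$, so reaching $\0$, whose logical reading is $\false$, forces $\CP\vDash \neg\exists(\EE\wedge\CC)$.

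For the converse (completeness), I would adapt the argument used by \citeN{AFM99constraints} for CSR to the coinductive setting. Assuming $\CP\vDash \neg\exists(\EE\wedge\CC)$, I would consider any \emph{fair} $\rapa$-derivation from $\astate{n\cdot\EE}{\CC}{\bar X}$, namely one that eventually applies every rule instance applicable to the accumulated CHR store. The hypothesis that $n$ strictly exceeds the maximum head size, combined with the $n$-fold replication of bodies in $\P^n$, is exactly what guarantees that multiset matching of any head --- including one with repeated occurrences of a single constraint --- can actually be carried out at each step: starting with $n$ copies of $\EE$ and producing $n$ copies of every body constraint, there are always enough occurrences of each constraint to match any applicable head.

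The main obstacle is that CPR derivations need not terminate (cf.\ \cref{example:fixpoint:coinductive}), so completeness cannot be obtained by simply exhausting a finite fair derivation. Instead, I would argue by contradiction: if some fair derivation \emph{never} reached $\0$, the union of built-in stores appearing along its states would be finitely satisfiable, hence satisfiable in $\CT$ by coherence of its axioms; fairness would then ensure that every rule-implication of $\CP$ is realized in the resulting structure, yielding a model of $\CP$ that witnesses $\exists(\EE\wedge\CC)$ --- contradicting the hypothesis. Hence every fair derivation must reach $\0$ in finitely many steps, producing the derivation required. Making this compactness-plus-model-construction step rigorous using only the weak coherent-formula axioms of Saraswat's simple constraint systems is the delicate part of the proof.
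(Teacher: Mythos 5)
Your reduction to the contrapositive form ($\astate{n\cdot\EE}{\CC}{\bar X}\rapas[\P^n]\0$ iff $\CP\vDash\neg\exists\prth{\EE\wedge\CC}$) and your treatment of the soundness direction match the paper exactly. The completeness direction is where you diverge and where there is a genuine gap. You propose a semantic argument: saturate a fair derivation, take the limit of the stores, invoke compactness for the built-in part, and read off a model of $\CP$ satisfying $\exists\prth{\EE\wedge\CC}$. The step that fails is the claim that ``fairness ensures every rule-implication of $\CP$ is realized in the resulting structure.'' A rule fires operationally only when its guard is \emph{entailed by the built-in store}, whereas the logical reading of the rule demands its conclusion whenever the guard merely \emph{holds in the chosen model}. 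Consider the rule $q(X)\Rightarrow X\geq 0\mid\false$ over arithmetic and the state $\state{q(Y)}{\true}{}$: no transition is applicable, so the fair derivation is the initial state itself, yet any model of the (trivial) limit built-in store that interprets $Y$ as a non-negative value, with $q$ interpreted as the stored atoms, fails to satisfy $\CP$. One must instead choose a valuation falsifying every unentailed guard instance simultaneously, and showing that such a choice always exists under only the coherent-formula axioms of $\CT$ (the paper explicitly does not assume ground completeness) is not a routine compactness argument --- it is essentially the content of the theorem. So the part you flag as ``delicate'' is not a technicality to be filled in later; it is the missing idea.

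The paper sidesteps this entirely by staying proof-theoretic. Since $\CP$ is a coherent theory, the classical entailment $\CP\vDash\neg\exists\prth{\EE\wedge\CC}$ can be replaced without loss of generality by an intuitionistic derivation $\EE\wedge\CC\vdash_{\CP}\false$ (this is where coherent logic and the Bezem--Coquand result are invoked), and \cref{proposition:cpr:completeness} then converts that proof tree into a CPR derivation by structural induction on the proof, choosing witnesses $\FF'$ and $\DD'$ case by case. In a proof tree, a rule instance can only be used once its guard has actually been derived from the built-in store, which is exactly the operational applicability condition, so the mismatch above never arises. Your reading of the role of $n$ (guaranteeing that multiset matching of heads with repeated atoms never blocks, since conjunction is idempotent in the logical reading) is correct and is indeed how the paper uses it, but the core of the completeness argument needs the intuitionistic lemma, not a limit-model construction.
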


  The proof of the theorem is based on the following completeness
  lemma, that ensures that to each intuitionistic deduction in the
  theory $\CP$ corresponds a CPR derivation \wrt\ the program $\P$.
  In the following, we use the notation $\mathcal T \vDash \EE \ra
  \exists X.\FF$ and $\EE \vdash_{\mathcal T} \exists X.\FF$ to
  emphasize that a deduction within a theory $\mathcal T$ is done in
  the classical logic model framework and the intuitionistic proof
  framework, respectively. Fortunately, both lines of reasoning
  coincide in our setting. This remarkable property is due to the
  fact we are reasoning in a fragment of classical logic, known as
  {\em coherent logic}, where classical provability coincides with
  intuitionistic provability.

\begin{lemma}[Intuitionistic completeness of CPR]
\label{proposition:cpr:completeness}
Let $\P$ be a CPR program and $n$ be some integer greater than the
maximal number of constraints occurring in the heads of $\P$. Let
$\EE$ and $\FF$ (resp\@. $\CC$ and $\DD$) be two multiset of CHR
(resp\@. built-in) constraints. If $\EE \wedge \CC  \vdash_{\CP} 
\exists \bar X.(\FF \wedge \DD)$, then there exist $\FF'$ and $\DD'$ such
that $\astate{n \cdot \EE}{\CC}{\fv(\FF, \CC)} \rapas[P^n] \astate{ n
  \cdot \FF'}{\DD'}{\bar Y}$ and $\FF'\wedge \DD' \vdash_{\CT} \exists
\bar X.(\FF \wedge \CC)$.
\end{lemma}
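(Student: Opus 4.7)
The plan is to reformulate the given intuitionistic deduction as a \emph{forward} (or \emph{saturation}) proof in coherent logic, and then mirror that proof step by step with a CPR derivation against $\P^n$.

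Because the logical reading of each propagation rule $r \at \KK \propag \GG \mid \CC_r, \BB_r$ is the coherent formula $\forall((\KK \wedge \GG) \rightarrow \exists \bar Z.(\CC_r \wedge \BB_r))$ (with $\bar Z = \lv(r)$) and because $\CT$ was assumed coherent, the whole theory $\CP$ is coherent and hence, as recalled just before the lemma, admits a complete forward proof system (the standard chase/saturation procedure). A forward proof of $\EE \wedge \CC \vdash_{\CP} \exists \bar X.(\FF \wedge \DD)$ can be presented as a finite sequence of pairs $(\EE_0, \CC_0), \dots, (\EE_m, \CC_m)$ with $(\EE_0, \CC_0) = (\EE, \CC)$, each transition $(\EE_i, \CC_i) \to (\EE_{i+1}, \CC_{i+1})$ selecting some rule of $\P$, a matching $\theta$ of its head inside $\EE_i$ with $\CC_i \vdashc \GG\theta$, and augmenting the facts with $\BB_r\theta$ and $\CC_r\theta$ (fresh names being used for $\lv(r)$); the proof closes with $\EE_m \wedge \CC_m \vdashc \exists \bar X.(\FF \wedge \DD)$.

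The argument then proceeds by induction on $m$. For $m = 0$ the empty CHR derivation works, with $\FF' := \EE$ and $\DD' := \CC$. In the inductive step, the hypothesis applied to the first $m-1$ forward steps gives a derivation $\astate{n \cdot \EE}{\CC}{\fv(\FF, \CC)} \rapas[\P^n] \astate{n \cdot \EE_{m-1}}{\CC_{m-1}}{\bar Y}$. The last forward step uses some rule $r$ of $\P$ with matching $\theta$ of $\KK$ inside $\EE_{m-1}$ and with $\CC_{m-1} \vdashc \GG\theta$; the corresponding rule $r \at \KK \propag \GG \mid \CC_r, n \cdot \BB_r$ of $\P^n$ is therefore applicable to the CHR state, and one firing extends the derivation to $\astate{n \cdot (\EE_{m-1} \cup \BB_r\theta)}{\CC_{m-1} \wedge \CC_r\theta}{\bar Y'} = \astate{n \cdot \EE_m}{\CC_m}{\bar Y'}$ after renaming the fresh local variables so that they coincide with the fresh witnesses introduced by the proof. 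Setting $\FF' := \EE_m$ and $\DD' := \CC_m$ closes the induction, and the required $\CT$-entailment is exactly the closing step of the forward proof.

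The main obstacles are technical rather than conceptual. First, the variable bookkeeping: the fresh variables produced at each CHR firing of $r$ must be consistently identified with the fresh existential witnesses introduced at the corresponding forward step, which will be arranged using the equivalence axioms of $\omega_a$ to re-scope them through the existential closure of the state. Second -- and this is where the assumption $n > \max_{r \in \P}|\text{head}(r)|$ is crucially used -- one must always be able to match $\KK$ inside the CHR store, even when $\theta$ collapses several head atoms into the same CHR constraint; the factor $n$ in the initial state and in every body of $\P^n$ guarantees that enough copies of each fact are always available.
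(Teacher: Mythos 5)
Your argument is sound, but it takes a genuinely different route from the paper's. The paper proves the lemma by structural induction on the intuitionistic proof tree of $\EE \wedge \CC \vdash_{\CP} \exists \bar X.(\FF \wedge \DD)$, handling each proof-rule case directly: logical axioms via reflexivity of $\rapas[\P^n]$, non-logical axioms of $\CT$ via the equivalence $\aequiv$, non-logical axioms arising from propagation rules via a single rule application, and cut / right conjunction-introduction by composing derivations, which is where the monotonicity of CPR (constraints are never consumed) is used. You instead first normalize the deduction into a forward (saturation) proof --- legitimate here because $\CP$ is a coherent theory and the goal is a positive existential formula; this is exactly the Bezem--Coquand completeness result the paper also invokes, but only later, in the proof of \cref{theorem:fixpoint:cpr} --- and then run a simple induction on the length of the forward proof, mirroring each saturation step by one firing of the corresponding rule of $\P^n$. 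Your route buys a cleaner induction (every step is exactly one rule application, and the cut/composition bookkeeping disappears); its cost is the up-front reliance on the nontrivial normalization theorem for coherent logic, which the paper deliberately keeps out of the lemma so that the lemma covers arbitrary intuitionistic proof trees by elementary means. Two points you should make explicit to close the sketch: first, the forward proof is branch-free only because rule bodies and the axioms of $\CT$ have purely conjunctive conclusions, so no disjunctive case split ever arises in the saturation; second, after a firing the store is the multiset $n\cdot\EE_{m-1} \uplus n\cdot\BB_r\theta$ rather than $n$ copies of a set union, which is still of the required shape $n\cdot\FF'$ provided $\FF'$ is read as a multiset --- consistent with the paper's own choice $\FF' = (\EE, \BB_c)$ in the propagation-axiom case.
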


\begin{proof}[Proof sketch]
  By structural induction on the proof tree $\pi$ of $ \EE \wedge \CC
  \vdash_{\CP} \exists \bar X.(\FF \wedge \DD)$. For the case where $\pi$
  is a logical axiom, we use the reflexivity of $ \rapas[P^n]$. For
  the case where $\pi$ is a non-logical axiom from $\CT$, we use the
  definition of $\aequiv$.  For the case where $\pi$ is a non-logical
  axiom corresponding to a propagation rule $\KK \propag \GG \mid
  \BB_c , \BB_b$, we choose $\FF' = \prth{\EE, \BB_c}$ and $\DD' =
  \prth{\CC \wedge \GG \wedge \BB_b}$, and apply $r$. For the case
  where $\pi$ ends with a cut or a right introduction of a
  conjunction, we use induction hypothesis and the fact that
  constraints are never consumed along a CPR derivation.  Other cases
  are more straightforward.
\end{proof}

\begin{proof}[Proof sketch of \cref{theorem:fixpoint:cpr}]
As we have noted previously, $\smodel{\P^n}$ is the set of states that
cannot be rewritten to an inconsistent states. Hence it is sufficient
to prove:
\[
 \astate{n  \cdot  \EE}{\CC}{\bar X} \; \not \! \!\!   \rapas[\P^n] \0
  \text{ if and only if }
\CP\nvDash \neg \exists \prth{\EE \wedge \CC}
\]
or equivalently the contrapositive:
\[
\CP\vDash \forall \prth{\prth{\EE \wedge \CC} \rightarrow \false}
\text{ if and only if }
\astate{n \cdot \EE }{\CC}{\bar X}  \rapas[\P^n] \0
\]
The ``if'' direction is direct by soundness of CHR.
For the ``only if'' direction, since $\CP$ is a {\em coherent logic
  theory} (\ie\ a set of formulas of the form $\forall\prth{\FF
  \rightarrow \exists \bar X.\FF'}$, where both $\FF$ and $\FF'$ are
conjunctions of atomic propositions), it can be assumed without loss
of generality that $ \EE \wedge \CC \vdash_{\CP} \false$ (See
\citeANP{BC05lpar}'s work about coherent logic
\citeNN{BC05lpar}). The result is then direct, by
\cref{proposition:cpr:completeness}.
\end{proof}

  The coinductive semantics for CPR, has strong similarities with the
  fixpoint semantics of CLP \cite{JL87popl}.  Both are defined by
  fixpoint of somehow dual operators and fully abstract the logical
  meaning of programs. Nonetheless the coinductive semantics of CPR is
  not compositional. That is not a particular drawback of our
  semantics, since the logical semantics we characterize is neither
  compositional. Indeed, if the logical readings of two states are
  independently consistent, then one cannot ensure that so is their
  conjunction. %
  It should be noticed that this non-compositionality prevents the
  immediate cause operators to be defined over the $\CT$-base (\ie\
  the cross product of the set of CHR constraints and the set of
  conjunctions of built-in constraints) as it is done for CLP, and
  requires a definition over set of states.

\section{Transition system semantics for CHR with persistent constraints}
\label{sect:hybrid}

In this section, we aim at obtaining a fixpoint semantics for the
whole language. 
Nonetheless, one has to notice that the completeness
result for CSR needs, among other things, the termination of $\rapa$, while
the equivalent result for CPR is based on the monotonic evolution of
the constraints store along derivations. Hence combining naively CSR
and CPR will break both properties, leading consequently to an
incomplete model.
In order to provide an accurate fixpoint semantics to programs
combining both kinds of rules (meanwhile removing unsatisfactory
scalar product in the wording of CPR completeness), we introduce a
notion of {\em persistent constraints}, following ideas of
\citeN{BRF10iclp} for their semantics $\omega_!$.  Persistent
constraints are special CHR constraints acting as classical logic
statements (\ie\ they are not consumable and their multiplicity does
not matter). Since they act as linear logic statements (\ie\ they are
consumable and their multiplicity matters), usual CHR constraints are
called {\em linear}. Because it combines persistent and linear 
constraints in a slightly less transparent way that $\omega_!$, we 
call this semantics {\em  hybrid}, and note it $\omega_h$.
  Due to the space limitations, the (non-trivial) proofs of the
  section are omitted, but can be found in the extended version of this
  paper.

\subsection{Hybrid operational semantics $\omega_h$}
\label{subsection:hybrid:operational}

On the contrary of $\omega_!$, the kind of a constraint (linear or
persistent) in $\omega_h$, is not dynamically determined according the
type of rules from which it was produced, but statically fixed. Hence,
we will assume the set of CHR constraints symbols is divided in two:
the {\em linear} symbols and the {\em persistent} symbols. Naturally,
CHR constraints built from the linear (resp. persistent) symbols are
called linear (resp. persistent) constraints. A {\em hybrid rule} is 
a CHR rule where the kept head contains only persistent constraints and
the removed head contains only linear constraints.
We will denote by $\states_p$, the set of {\em purely persistent}
states (\ie\ states of the form $\state{\PP}{\CC}{\bar X}$ where $\PP$
is a set of persistent constraints).  $\P^s$ will refer to the set of
simplification rules of a hybrid program $\P$,
respectively.

The hybrid semantics is expressed as a particular instance of the
equivalence based semantics presented in
\cref{subsection:preliminaries:abstract}. It uses the abstract state
equivalence extended by a {\em contraction} rule enforcing the
impotency of persistent constraints.

\begin{definition}[Hybrid operational transition]
  The {\em hybrid equivalence} is the smallest relation, $\hequiv$,
  over states containing $\aequiv$ satisfying the following rule:
\[
 \state{c, c, \CC}{\DD}{\bar X} \hequiv \state{c, \CC}{\DD}{\bar X} \text{  if $c$ is a persistent constraints }
\]%
The {\em hybrid transition system} is defined as the tuple $(\Sigma, \raph)$.
\end{definition}

The hybrid programs are programs where propagation rules ``commute''
with simplification rules in the sense of abstract rewriting
systems~\cite{terese03}.  In other words, derivations can be permuted
so that simplification rules are fired first, and propagation rules
fire only when simplification rule firings are exhausted.  Indeed, the
syntactical restriction prevents the propagation head constraints to
be consumed by simplification rules, hence once a propagation rule is
applicable, then it will be so for ever. Of course, number of CHR
programs do not respect the hybrid syntax and therefore cannot be run
in our framework.  Nonetheless, what we loose with this restriction,
we compensate by pioneering a logically complete approach to solve the
problem of trivial non-termination (see next
\cref{theorem:Hybrid:Implementation:Completeness}).

\subsection{Hybrid transition system semantics}
\label{section:hybrid:transition}

We present the transition system semantics for hybrid programs. This
semantics is expressed by fixpoints over the hybrid transition
system. It is built using the immediate cause operators we have defined in
the previous section.

\begin{definition}[Hybrid transition system semantics]
  The {\em hybrid (transition system) semantics} of a hybrid program
  $\P$ is defined as:
\[
\hmodel{\P} = {\nu
      \X.\prth{\prea[\P]\prth{\X} \cap \mu \Y.\prth{\pree[\P^s]\prth{\Y} \cup
          \prth{\Sigma_p \setminus \eqclass{\0}} }}}
\]
\end{definition}

The theorem we give next states soundness and completeness of the hybrid
transition system semantics of confluent programs, provided the states
respect a data-sufficiency property.
In this paper, we do not address the problem of proving confluence of
hybrid programs, but claim it can be tackled by extending
straightforwardly the work of \citeN{AFM99constraints} or by
adequately instanstiating the notion of abstract critical pair we
proposed in a previous work \cite{HF07rta}.

\begin{definition}[Data-sufficient state]
  A hybrid state $\sigma$ is {\em data-sufficient} \wrt\ a hybrid
  program $\P$ if any state $\sigma'$ accessible form $\sigma$ can be
  simplified (\ie\ rewritten by $\P^s$) into a purely persistent sate%
    \ (\ie\ for any state $\sigma'\in\Sigma$, if $\sigma\raphs\sigma'$, then
    there exists a sate $\sigma''\in\Sigma_p$ \st\
    $\sigma'\raphs[\P^s]\sigma''$).
\end{definition}

This property ensures there is at least one computation where
propagation rules are applied only once all linear constraints have been
completely simplified. It is a natural extension of the eponymous
property for CSR \cite{AFM99constraints}. 

\removebrackets 
\begin{theorem}[\textup{(}Soundness and completeness of hybrid transition system
 semantics\textup{)}\hspace{-2cm}]
\label{theorem:Hybrid:FullAbstraction}
Let $\P$ be a hybrid program such that $\P^s$ is confluent.
Let $\state{\LL, \PP}{\EE}{\bar X}$ be a data-sufficient state \wrt\
$\P$. We have:
\[
\state{\LL, \PP}{\EE}{\bar X} \in \hmodel{\P} 
\text{ if and only if }
 \prth{\LL \wedge \PP \wedge \EE} \text{ is satisfiable within } \CP
\]
\end{theorem}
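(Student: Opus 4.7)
The plan is to first rewrite the nested fixpoint as a concrete characterization, and then obtain both directions from soundness and completeness of the component semantics together with the data-sufficiency hypothesis.

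First, I would unfold the definition of $\hmodel{\P}$ into an elementary statement about $\raph$-reachable descendants. By the same reasoning as in the proof of \cref{theorem:fixpoint:csr}, but applied to the pure CSR program $\P^s$ over the hybrid transition system, the inner least fixpoint $\mu\Y.(\pree[\P^s](\Y) \cup (\Sigma_p \setminus \eqclass{\0}))$ coincides exactly with the set of states admitting some $\P^s$-derivation to a consistent purely persistent state. The outer greatest fixpoint is then handled by Tarski-Knaster coinduction: $\sigma \in \hmodel{\P}$ iff every $\raph$-descendant $\sigma'$ of $\sigma$ lies in that inner set, i.e., iff every hybrid-reachable $\sigma'$ can be $\P^s$-simplified to a consistent purely persistent state. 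One must check that data-sufficiency propagates under $\raph$, but this is immediate by composition of derivations.

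For the ``only if'' direction I would proceed by logical soundness: since $\sigma \in \hmodel{\P}$, the state $\sigma$ itself $\P^s$-reduces to some consistent purely persistent state $\sigma'=\state{\PP'}{\EE'}{\bar X}$. Both the rewriting rule of $\raph$ (by the logical reading of hybrid rules) and the enlargement $\hequiv$ of $\aequiv$ (the contraction $c,c\equiv c$ is a classical tautology on persistent symbols) preserve the logical reading modulo $\CP$. Hence the logical readings of $\sigma$ and $\sigma'$ are $\CP$-equivalent, and the consistency of $\sigma'$ yields the satisfiability of $\LL\wedge\PP\wedge\EE$ in $\CP$.

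For the ``if'' direction I would use the characterization from the first step: it suffices to show that every $\sigma'$ with $\sigma\raphs\sigma'$ admits a $\P^s$-derivation to a consistent purely persistent state. By the same soundness of $\raph$, the logical reading of any such $\sigma'$ is still satisfiable in $\CP$. Data-sufficiency then provides a $\P^s$-derivation of $\sigma'$ to some purely persistent $\sigma''$, and soundness of $\P^s$ transports satisfiability to $\sigma''$; confluence of $\P^s$ rules out the possibility of $\sigma''\in\eqclass{\0}$ being a dead-end for an otherwise satisfiable state. Hence $\sigma'$ lies in the inner lfp, and by coinduction $\sigma\in\hmodel{\P}$.

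The main technical obstacle is the first step: formally justifying that the outer gfp coincides with the elementary ``every $\raph$-descendant lies in the inner lfp'' characterization requires exhibiting that set as a post-fixpoint of $\X\mapsto \prea[\P](\X)\cap\mu\Y.(\dots)$ and then verifying it is the greatest one, which is standard but needs care because the inner $\mu$ clause reoccurs at every coinductive unfolding. Once this reformulation is in place, both directions reduce cleanly to soundness of CHR (as in Theorem 3.20 of Frühwirth's book) together with the data-sufficiency assumption, with confluence of $\P^s$ ensuring that the $\P^s$-normal form reached is unique up to $\hequiv$ and faithfully reflects satisfiability.
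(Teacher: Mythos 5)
Your unfolding of the nested fixpoint is correct (the inner $\mu$ is the set of states $\P^s$-reducible under $\raph$ to a consistent purely persistent state, and the outer $\nu$ is the set of states all of whose $\raph$-descendants lie in that set), and your ``if'' direction is essentially sound: soundness of transitions preserves $\CP$-satisfiability along every derivation, data-sufficiency supplies the purely persistent simplified form, and satisfiability forces its consistency. Note that the paper defers the actual proof to its extended version, so the comparison below is against the machinery it sets up in \cref{sect:fixpoints}.

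The genuine gap is in your ``only if'' direction. From $\sigma\in\hmodel{\P}$ you extract a consistent purely persistent $\sigma'=\state{\PP'}{\EE'}{\bar Y}$ and conclude that, the readings of $\sigma$ and $\sigma'$ being $\CP$-equivalent, ``the consistency of $\sigma'$ yields the satisfiability'' of the reading. This is a non sequitur: consistency of $\sigma'$ only means $\EE'$ is satisfiable within $\CT$, whereas you must show $\PP'\wedge\EE'$ is satisfiable within $\CP$, and the persistent store $\PP'$ is still subject to the propagation rules of $\P$ (take $\P$ containing $p\propag\false$ and $\sigma'=\state{p}{\true}{}$: this state is consistent, yet its reading is $\CP$-unsatisfiable). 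As written, your argument uses only membership of $\sigma$ in the inner \lfp\ and never exploits the outer \gfp; but the coinductive layer exists precisely to bridge this gap, and this direction is the analogue of the \emph{completeness} half of \cref{theorem:fixpoint:cpr}, not of soundness. The correct route is the contrapositive via an analogue of \cref{proposition:cpr:completeness}: if $\LL\wedge\PP\wedge\EE$ is $\CP$-unsatisfiable then, $\CP$ being a coherent theory, there is an intuitionistic derivation of $\false$, which can be replayed as a hybrid derivation reaching $\eqclass{\0}$ (using data-sufficiency and confluence of $\P^s$ to schedule the simplification steps, and the $\hequiv$-contraction of persistent constraints in place of the scalar product $n\cdot{}$ of \cref{theorem:fixpoint:cpr}), contradicting $\sigma\in\hmodel{\P}$. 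Without this completeness step the hard direction of the theorem is not established.
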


The following proposition states that it is sufficient to consider only one fair
derivation. 
This result is fundamental to allow the hybrid semantics to be
efficiently implemented.

\begin{definition}[Propagation fair derivation]
  A derivation $\sigma_0 \raph \sigma_1 \raph \dots $ is {\em
    propagation fair} if for any propagation redex
  $\redex{r}{\KK}{\EE}{\bar X}$ of a state $\sigma_i$ in the
  derivation,
 % there exists a state $\sigma_j$ in the derivation \st\ the reduced
 % redex in $\sigma_j$ is $\redex{r}{\CC}{\FF}{\bar Y}$ with $\CT \vDash
 % \cexists{\bar Y}\FF \rightarrow \cexists{\bar X}{\EE}$.
  there exist two states $\sigma_j$, $\sigma_{j+1}$ such that  the transition
  from $\sigma_j$ to $\sigma_{j+1}$ is a propagation application where
  the reduced redex is identical or stronger to
  $\redex{r}{\KK}{\EE}{\bar X}$ (\ie\ the reduced redex is of the form
  $\redex{r}{\KK}{\FF}{\bar Y}$ with $\CT \vDash \cexists{\bar Y}\FF
  \rightarrow \cexists{\bar X}{\EE}$).
\end{definition}

\begin{proposition}[Soundness and completeness of propagation fair derivations]
\label{prop:Hybrid:Fairness}
Let $\P$ be a hybrid program such that $\P^s$ is confluent and
terminating. Let $\sigma$ be a data-sufficient state. $\sigma \in
\hmodel{\P}$ holds if and only if there is a consistent propagation
fair derivation starting from $\sigma$.
\end{proposition}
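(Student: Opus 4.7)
The plan is to establish the two directions separately, using \cref{theorem:Hybrid:FullAbstraction} as a semantic bridge between membership in $\hmodel{\P}$ and satisfiability of the logical reading of $\sigma$ in $\CP$.

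For the ``only if'' direction, assume $\sigma \in \hmodel{\P}$ and unfold the nested fixpoint: there exists a post-fixpoint $\X \ni \sigma$ of the outer $\nu$ such that $\X$ is closed under $\raph$ and every state in $\X$ can be reduced by $\P^s$ alone to a state in $\Sigma_p \setminus \eqclass{\0}$. I build a derivation from $\sigma$ by alternating two phases. From the current state $\sigma_i \in \X$, I first apply $\P^s$ exhaustively; this terminates by hypothesis, and by confluence of $\P^s$ together with data-sufficiency, the unique normal form $\tau_i$ lies in $\Sigma_p \setminus \eqclass{\0}$ and is therefore consistent. I then fire a single propagation redex, chosen by a FIFO scheduler that accumulates all propagation redexes encountered along the derivation so far. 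Because hybrid simplification rules consume only linear constraints while propagation redexes are assembled exclusively from persistent constraints, any once-available propagation redex remains applicable at every later state, so the FIFO schedule is well defined and guarantees propagation-fairness. Every intermediate simplification state is consistent as well: if it were $\hequiv \0$, then by confluence and termination its $\P^s$-normal form would also be $\hequiv \0$, contradicting the inner $\mu$ condition.

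For the ``if'' direction, assume a consistent propagation-fair derivation from $\sigma$ exists. By \cref{theorem:Hybrid:FullAbstraction}, it suffices to prove that the logical reading of $\sigma$ is satisfiable in $\CP$, which I do by contraposition. Supposing the reading is unsatisfiable, the coherence of $\CP$ turns the classical refutation into an intuitionistic one, and a straightforward adaptation of \cref{proposition:cpr:completeness} to the hybrid setting---where the idempotence of persistent constraints built into $\hequiv$ eliminates the need for the scalar-product inflation used there---produces some derivation $\sigma \raphs \0$. Invoking the commutation property of hybrid programs recalled in \cref{subsection:hybrid:operational}, this witness derivation can be reorganized into an alternation of complete $\P^s$-normalizations and single propagation applications, matching exactly the shape of the derivation built in the forward direction. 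Monotonicity of the persistent store and confluence of $\P^s$ then imply that every propagation step of the rearranged derivation remains a pending redex along the assumed fair derivation, so by fairness the latter fires them too and eventually reaches a state $\hequiv \0$, contradicting consistency.

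I expect the main obstacle to be the commutation-plus-fairness argument in the ``if'' direction. Formalizing that any inconsistency-reaching derivation is absorbed by an arbitrary fair one requires a careful inductive matching: each propagation firing in the rearranged witness must correspond to a redex that is still pending in the fair derivation, and the successive simplification phases must be reconciled via confluence of $\P^s$. A secondary and more technical point is the bookkeeping of the FIFO schedule in the forward direction, which must handle propagation redexes newly created by earlier firings so that each of them is processed in finite time.
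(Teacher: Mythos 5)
The paper omits the proofs for this whole section, so there is no in-paper argument to compare against; I am reviewing your proposal on its own terms. Your ``only if'' direction is sound: unfolding the nested fixpoint shows that every state reachable from $\sigma$ lies in the inner least fixpoint and is therefore consistent, and your alternation of exhaustive $\P^s$-normalisation (which terminates and, by confluence plus data-sufficiency, lands on a consistent purely persistent normal form) with a FIFO discharge of pending propagation redexes --- which indeed persist, since hybrid rules never consume persistent constraints and the built-in store only strengthens --- does yield a consistent propagation-fair derivation.

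The ``if'' direction has a genuine gap, at exactly the point you flag but do not resolve. Propagation fairness constrains only propagation redexes; it imposes nothing on simplification steps. Your absorption argument replays the propagation firings of the rearranged witness $\sigma \raphs \0$ inside the assumed fair derivation, but when the inconsistency of the witness is produced by a simplification step, fairness gives you no handle to force the fair derivation to perform it. Concretely, take $\P$ with the simplification rule $l \simplif \false$ ($l$ linear) and the propagation rule $p \propag p'$ ($p$, $p'$ persistent), and $\sigma = \state{l, p}{\true}{}$. Then $\sigma$ is data-sufficient, $\P^s$ is confluent and terminating, and $\sigma \notin \hmodel{\P}$ because its only $\P^s$-normal form is inconsistent; yet the derivation that fires only the propagation redex is propagation-fair, consistent, and never meets a state $\hequiv \0$. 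So the step ``by fairness the latter fires them too and eventually reaches a state $\hequiv \0$'' does not follow. To close this direction you must additionally require, or separately establish, that the fair derivation is cofinally $\P^s$-exhaustive --- as the prioritized implementation of \cref{fig:Set:Source2Source} guarantees and as your own forward construction produces --- so that confluence of $\P^s$ can reconcile its normal forms with those of the witness. As a secondary remark, the detour through \cref{theorem:Hybrid:FullAbstraction} and an adaptation of \cref{proposition:cpr:completeness} is heavier than needed: $\sigma \notin \hmodel{\P}$ together with data-sufficiency, confluence and termination of $\P^s$ already yields $\sigma \raphs \0$ directly, without any appeal to the logical semantics.
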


\subsection{Implementation of the hybrid semantics}
\label{subsect:hybrid:implementation}

We continue by addressing the question of implementing  the hybrid
semantics in a sound and complete way. For this purpose, we assume
without loss of generality that the constraint symbols $f/1$, $f/2$,
$a/2$, $c_f/1$, and $c_a/1$ are fresh \wrt\ the program $\P$ we
consider. The implementation of a hybrid program $\P$ consists in
a source-to-source translation $\P^\diamond$ intended to be executed
in the concrete semantics $\omega_p$. This transformation is given in
detail in \cref{fig:Set:Source2Source}.
In order to be executed an hybrid state $\sigma$ has to be translated
into a concrete state $\sigma^\diamond$ as follows: if $\LL$ and
$(c_1, \dots, c_n)$ are multisets of linear and persistent constraints
respectively, then $\hstate{\LL, d_1, \dots d_n}{\DD}{\VV}^\diamond =
\cstate{\LL, f(d_1), \dots, f(d_n), c_f(0),
  c_a(0)}{\emptyset}{\DD}{\emptyset}{\VV}{0}$

Before going further, let us give some intuition about the behaviour
of the translation. If a rule needs two occurrences of the same
persistent constraint, step 1 will insert an equivalent rule which
needs only one occurrence of the constraint. In the translation each
persistent constraint can be applied in three different successive
states: {\em fresh}, indicated by $f/1$, {\em frozen}, indicated by
$f/2$, and {\em alive}, indicated by $a/2$. Step 2 ensures, on the one
hand, that only alive constraints can be used to launch a rule, and on
the other hand, that the persistent constraints of the right-hand side
are inserted as fresh. Each frozen and alive constraint is associated
to a time stamp indicating the order in which it has been
asserted. The fresh constraints are time stamped and marked as frozen
as soon as possible by $\textit{stamp}$, the rule of highest priority
(the constraint $c_f/1$ indicating the next available time
stamp). Only if no other rule can be applied, the $\textit{unfreeze}$
rule turns the oldest frozen constraint into an alive constraint while
preserving its time stamp (the constraint $c_a/1$ indicating the next
constraint to be unfrozen). Rule $\textit{set}$ prevents trivial
loops, by removing the youngest occurrence of two identical persistent
constraints. From a proof point of view, the application of this last
rule corresponds to the detection of a cycle in a coinduction proof,
the persistent constraints representing coinduction hypothesises
\cite{BM96csli}.

\begin{figure}

\begin{minipage}{0.90\textwidth}
\noindent
Let $\cal P^\diamond$ the program $\P$ where simplification  and
propagation rules are given with the priorities $3$ and $4$
respectively. Apply the following steps:
 \begin{description}
\item[step 1].
 Apply the following rules until convergence :\\
  If $(p:: c, d, \KK \backslash \HH \simplif \GG \mid \BB,
  \LL, \PP))$ is in $\cal P^\diamond$, %\\ $c$ and $d$ are persistent, and
  with $\CT \vDash \exists (c\!=\!d)$  \\then add the rule $(p:: c , \KK \backslash \HH
  \simplif c\! = \!d \wedge \GG \mid \BB, \LL, \PP)$ to $\cal P^\diamond$
  \smallskip
\item[step 2] Substitute any rule $(p:: c_1, \dots, c_m \backslash \HH \simplif \GG \mid \BB, \LL, d_1, \dots d_n)$   by \\
$(p::  a(X_1, c_1), \dots, a(X_m, c_m) \backslash \HH \simplif
 \GG \mid \BB_l,  \LL, f(d_1), \dots, f(d_n))$ \\
 where $x_1, \dots  x_m$ are pairwise distinct variables
\smallskip
\item[step 3] Add to $\cal P^\diamond$ the rules:\\
$
\begin{array}{llrcl}
 1 &:: &\textit{stamp} &@ & f(X), c_f(Y) \simplif f(Y, X), c_f(Y+1) \\
 2 &:: &\textit{set} &@ & a(Y, X) ~ \backslash ~ a(Z, X) \simplif Y <  Z \mid \top \\
 5 &:: &\textit{unfreeze} &@ & f(Y, X), c_a(Y)  \simplif  a(Y, X), c_a(Y+1) 
\end{array}
$
 \end{description}
\end{minipage}

\caption{Source-to-source translation for hybrid programs}
\label{fig:Set:Source2Source}
\end{figure}

The two following theorems state that our implementation is sound and
complete \wrt\ failure.
\Cref{theorem:Hybrid:Implementation:Soundness} shows furthermore that the
implementation we propose here is sound \wrt\ finite
success. It is worth noting that it is hopeless to look for a
complete implementation \wrt\ to success, since the problem to
know if a data-sufficient state is in the coinductive semantics is
undecidable. The intuition behind this  claim is that otherwise it
would possible to solve the halting problem.

\begin{theorem}[Soundness \wrt\ success and failure]%
\label{theorem:Hybrid:Implementation:Soundness}
Let $\P$ be a hybrid program such that $\P^s$ is confluent,
and let
$\hstate{\CC}{\EE}{\bar X}^\diamond $ $ \ra_{\P^\diamond}^* $
$\cstate{\emptyset}{\DD}{\FF}{T}{\bar Y}{i} \not \ra_{\P^\diamond}$ be a
terminating derivation. 
$\hstate{\CC}{\EE}{\bar X} \in \hmodel{\P}$ holds
if and only if %so does $ \CT \vDash \exists \FF$.
$\FF$ is satisfiable within $\CT$.
\end{theorem}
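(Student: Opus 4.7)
The plan is to establish a tight correspondence between the terminating $\omega_p$-derivation of $\P^\diamond$ and a propagation fair $\omega_h$-derivation of $\P$, and then to invoke \cref{prop:Hybrid:Fairness}.

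I would first define a decoding from concrete states of $\P^\diamond$ to hybrid states of $\P$ that strips the counters $c_f/1$, $c_a/1$ and the markers $f/1$, $f/2$, $a/2$ from the persistent constraints, and discards the token store together with the identifiers. By construction, the initial concrete state decodes to the original hybrid state, and the terminal concrete state decodes to a hybrid state whose built-in store is $\FF$. The next step is to verify that every $\omega_p$-transition of $\P^\diamond$ projects to either a $\hequiv$-step or a $\raph$-step in $\omega_h$. The rules \emph{stamp} and \emph{unfreeze} are administrative: they merely move a persistent constraint through the stages \emph{fresh}, \emph{frozen}, \emph{alive} without altering the decoded state. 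The \emph{set} rule coincides with the contraction rule of $\hequiv$, since it eliminates a superfluous copy of an identical persistent constraint. The translated original rules obtained through steps~1 and~2 reproduce the firings of the rules of $\P$: step~1 generates the instances needed when a rule consumes several syntactic copies of the same persistent constraint through built-in equalities, while step~2 enforces that only alive constraints may trigger a rule and that newly produced persistent constraints enter as fresh. From the terminating concrete derivation I therefore extract an $\omega_h$-derivation of $\P$ ending in a state whose built-in store is $\FF$.

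The delicate step is to show that this extracted derivation is propagation fair. Fairness is enforced by the priority layering of $\P^\diamond$: \emph{unfreeze} has the lowest priority, so a frozen constraint is released only when no other rule can fire; \emph{stamp} has the highest priority, ensuring that every fresh constraint is promptly enrolled in the frozen queue; and the propagation rules of $\P^\diamond$ sit in between, so any applicable propagation redex over the currently alive persistent store must be fired before any further unfreezing. Since the derivation terminates and \emph{set} removes duplicated alive constraints, every propagation redex eventually becomes visible at the alive level and is either applied or made redundant, witnessing propagation fairness up to $\hequiv$.

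With this correspondence in hand, \cref{prop:Hybrid:Fairness} yields the theorem. If $\FF$ is satisfiable, then the extracted propagation fair derivation is consistent, and hence $\hstate{\CC}{\EE}{\bar X} \in \hmodel{\P}$. Conversely, if $\hstate{\CC}{\EE}{\bar X} \in \hmodel{\P}$, then some consistent propagation fair derivation exists; using the confluence of $\P^s$ together with the fact that propagation steps are monotonic on the built-in store, one argues that all propagation fair derivations agree on the consistency of their terminal built-in store, so in particular the specific one extracted from $\P^\diamond$ is consistent and $\FF$ is satisfiable. The main obstacle I foresee lies precisely in this last agreement argument: carefully combining the confluence of $\P^s$ with the monotonic behaviour of propagation and equivalence steps, so as to transfer consistency from an arbitrary fair derivation to the specific one read off from the concrete implementation.
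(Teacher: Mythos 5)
The paper itself gives no proof of this theorem --- the proofs of \cref{sect:hybrid} are explicitly deferred to an extended version --- so your attempt can only be judged on its own merits. Your overall architecture (decode concrete states to hybrid states, show each $\omega_p$-step of $\P^\diamond$ projects to a $\hequiv$- or $\raph$-step of $\P$, argue the extracted derivation is propagation fair, conclude via \cref{prop:Hybrid:Fairness}) is the natural one and matches the role the paper assigns to \cref{prop:Hybrid:Fairness} (``fundamental to allow the hybrid semantics to be efficiently implemented''). However, there is a genuine gap in the way you invoke that proposition: \cref{prop:Hybrid:Fairness} assumes that $\P^s$ is confluent \emph{and terminating} and that the initial state is \emph{data-sufficient}, whereas \cref{theorem:Hybrid:Implementation:Soundness} assumes only confluence of $\P^s$ plus the existence of one terminating concrete derivation. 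Neither missing hypothesis follows from what you have: the terminal concrete state $\cstate{\emptyset}{\DD}{\FF}{T}{\bar Y}{i}$ may still contain linear constraints in $\DD$, so the decoded final state need not be purely persistent and data-sufficiency can fail; and termination of the concrete derivation is partly an artefact of the token store, so it does not yield termination of $\P^s$ under $\raph$. You must either prove a variant of the fairness proposition under the weaker hypotheses actually available, or extract directly from your consistent fair derivation a witness for membership in the nested fixpoint $\nu\X.\prth{\prea[\P]\prth{\X} \cap \mu \Y.\prth{\pree[\P^s]\prth{\Y} \cup \prth{\Sigma_p \setminus \eqclass{\0}}}}$; as written, the citation does not apply.

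A second, smaller point: the direction you flag as the ``main obstacle'' is actually the easy one and needs none of the machinery you propose. If $\hstate{\CC}{\EE}{\bar X} \in \hmodel{\P}$, then by the outer greatest fixpoint over the \emph{universal} cause operator $\prea[\P]$, every $\raph$-reachable state lies in $\hmodel{\P}$, hence in the inner least fixpoint, hence is consistent; since your decoding sends the concrete derivation to a $\raph$-derivation, the decoded final state is reachable and consistent, so $\FF$ is satisfiable. No appeal to confluence of $\P^s$, to agreement among fair derivations, or to monotonicity of propagation is needed there. The real work is concentrated in the converse direction and in the simulation lemma itself, where the gap between the matching-based \textbf{Apply} of $\omega_p$ and the equivalence-based firing of $\omega_h$, and the interaction of the \textit{stamp}/\textit{unfreeze}/\textit{set} machinery with the priorities, must be handled explicitly rather than described informally.
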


\begin{theorem}[Completeness \wrt\ failure]
\label{theorem:Hybrid:Implementation:Completeness}
Let $\P$ be a hybrid program such that $\P^s$ is confluent and
terminating. Let $\hstate{\CC}{\EE}{\bar X}$ be a data-sufficient state.
If $\hstate{\CC}{\DD}{\bar X} \notin \hmodel{\P}$, then any concrete
derivation starting form $\hstate{\CC}{\DD}{\bar X}^\diamond$ finitely
fails.
\end{theorem}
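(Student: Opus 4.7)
The strategy is to show that every concrete derivation from $\sigma^\diamond$, viewed abstractly, is a propagation fair hybrid derivation of $\P$ starting from $\sigma$; then, since $\sigma \notin \hmodel{\P}$ by hypothesis, \cref{prop:Hybrid:Fairness} forces that derivation to be inconsistent, and the inconsistency is detected in the concrete semantics, yielding finite failure.

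First, I would define a projection $\pi$ from concrete $\P^\diamond$-states to hybrid $\P$-states: $\pi$ removes the counters $c_f(\cdot)$ and $c_a(\cdot)$, strips the wrappers $f/1$, $f/2$, and $a/2$ (treating their inner argument as a persistent constraint), and forgets identifiers and the token history. Step~2 of the translation guarantees that any application of a translated simplification or propagation rule of $\P$ uses only alive occurrences of its persistent head, and therefore corresponds via $\pi$ to an application of the original rule in $\omega_h$. The auxiliary rules \textit{stamp} and \textit{unfreeze} leave $\pi$ invariant (they merely reclassify a persistent constraint), while \textit{set} realises a single contraction step of $\hequiv$. Consequently every concrete derivation from $\sigma^\diamond$ projects to a hybrid derivation from $\sigma$.

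Next, I would verify that the induced derivation is propagation fair. The priority order $\textit{stamp}(1) < \textit{set}(2) < \text{simplif.}(3) < \text{propag.}(4) < \textit{unfreeze}(5)$ makes \textit{unfreeze} fire only when no simplification or propagation can; combined with the FIFO discipline enforced by the time stamps, this ensures that every persistent constraint is made alive in the order it was frozen. Given any propagation redex $\redex{r}{\KK}{\EE}{\bar X}$ at a state $\sigma_i$ of the induced derivation, its persistent head constraints are all either already alive or frozen at a finite stamp, so they all become alive at some finite position $j \geq i$; at that point the translated propagation rule must fire before any further unfreeze, producing either $\redex{r}{\KK}{\EE}{\bar X}$ itself or a stronger redex (possibly because of equalities accumulated in the built-in store). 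Step~1 of the translation closes the gap when the rule's head contains several occurrences of the same persistent constraint that \textit{set} would collapse, since it inserts specialised rules that replay the collapse under an explicit unification guard; the token mechanism of $\omega_p$ prevents a given firing from recurring and thus gives every redex a fair chance.

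Finally, \cref{prop:Hybrid:Fairness} applied to $\sigma \notin \hmodel{\P}$ forces the induced propagation fair derivation to contain an inconsistent state at some finite position; via the projection, the corresponding prefix of the concrete derivation reaches a state whose built-in store is unsatisfiable in $\CT$, at which point the \textbf{Apply} rule is blocked by its consistency check, \textbf{Solve} has no more built-ins to process, and no further progress is possible, \ie\ the derivation finitely fails. The main obstacle I anticipate is in the second step: carefully arguing that the FIFO \textit{unfreeze}, the \textit{set} contraction, and the Step~1 specialisation jointly suffice to simulate \emph{every} propagation redex, so that no redex can be permanently starved by adversarial scheduling among equal-priority propagations.
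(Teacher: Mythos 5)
Your plan is correct and follows what is evidently the paper's intended route: the paper itself omits this proof (deferring to an extended version), but it introduces \cref{prop:Hybrid:Fairness} precisely as the result that is ``fundamental to allow the hybrid semantics to be efficiently implemented'', and your argument---project concrete $\P^\diamond$-derivations onto hybrid ones, show that the priority ordering together with the FIFO \textit{unfreeze} discipline and the token store makes the projected derivation propagation fair, then invoke \cref{prop:Hybrid:Fairness} to force an inconsistent state at a finite index and hence finite failure---is exactly that route. The obstacle you flag at the end (that \textit{set}, Step~1 and the token mechanism jointly ensure no propagation redex is starved, and that delayed firings occur with a \emph{stronger} built-in store so the fairness condition's implication points the right way) is indeed where the real work lies, and your outline addresses it in the right way.
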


The implementation we propose has strong connections with the
co-SLD, an implementation of the \gfp\ semantics of Logic Programming
proposed by \citeN{SMBG06iclp}.  Both are based on a dynamic
synthesis of coinductive hypothesises and a cycle detection in
proofs. But because it is limited to rational recursion, the co-SLD is
logically incomplete \wrt\ both successes and failures (\ie\ there are
queries true and false \wrt\ the logical reading of a
program that cause the interpreter to loop).
 That contrasts with CHR, where any coherent constraint system can be
 used without loosing logical completeness \wrt\ failures.

\section{Applications}
\label{sect:applications}

In this section, we illustrate the power of CHR for coinductive
reasoning when it is provided with its fixpoint semantics.  In
particular we show it yields an elegant framework to realize coinductive
equality proofs for languages and regular expressions as presented by
\citeN{Rutten98concur}.

\subsection{Coinductive language equality proof}

Firstly, let us introduce the classical notion of binary automaton in
a slightly different way from usual.  A {\em binary automaton} is a
pair $(\L, f)$ where $\L$ is a possibly infinite set of {\em states}
and $f : \L \ra \{0,1\} \times \L \times \L$ is a function called
{\em destructor}. Let us assume some automaton $(\L, f)$. For any state
$L\in \L$ such that $f(L)=(T, L_a, L_b)$, we  write $L
\xrightarrow{{}_a} L_a$, and $L \xrightarrow{{}_b} L_b$, and $t(L) =
T$.  $\L(L) = \{a_1 \dots a_n | L \xrightarrow{{}_{a_1}} L_1
\xrightarrow{{}_{a_2}} \dots \xrightarrow{{}_{a_n}} L_n
\wedge t (L_n) = 1 \}$ is the language accepted by a state $L$.  A {\em
  bisimulation} between states is a relation $\R \subset \L \times \L$
verifying:
\[
\text{If } K\,\R\, L \text{ then } 
\begin{cases}
t(K) = t(L), \\
K  \xrightarrow{{}_a} K_a,\, L  \xrightarrow{{}_a} L_a,\,   K_a\,\R\, L_a, \text{ and } \\
K  \xrightarrow{{}_b} K_b,\, L  \xrightarrow{{}_b} L_b,\,   K_b\,\R\, L_b,
\end{cases} 
\]

Contrary to the standard definition, in the present setting,
an automaton does not have an initial state and may have an infinite number
of states. As represented here, an automaton is a particular coalgebra
\cite{BM96csli}.
 Due to the space limitations, we will not enter in
details in the topic of coalgebra\footnote{We invite unfamiliar
  readers to refer to the gentle introduction of \citeN{Rutten98concur}.}, but only state the following {\em
  Coinductive Proof Principle} \cite{Rutten98concur,BM96csli} which
gives rise to the representation of automata as coalgebra:

\begin{center}
  \begin{minipage}{0.9\linewidth}
    In order to prove the equality of the languages recognized by two
    states $K$ and $L$, it is sufficient to establish the existence of
    a bisimulation relation in $\L$ that includes the pair $(K, L)$.
  \end{minipage}
\end{center}

A nice application of CPR consists in the possibility to directly
represent coalgebra and prove bisimulation between states.  For
instance, one can easily represent a finite automaton using variables
for states and binary user-defined constraints (of main symbol $f/2$)
for the destructor function. \Cref{fig:Automaton} gives an example of
an automaton and its representation as a multiset $\DD$ of CHR
constraints.
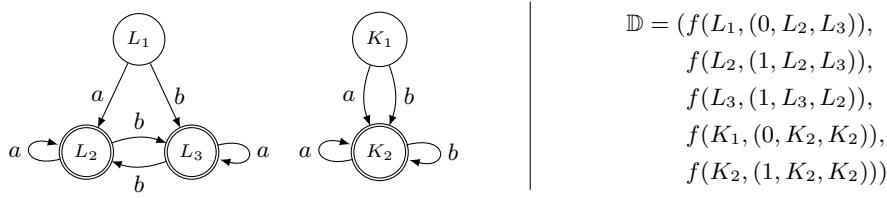
\begin{figure}
  \begin{tikzpicture}[exen/.style={shape=circle,draw}, >=latex, shape border  uses incircle, final/.style={exen,double}]
\node[exen](L1) at (0.8,1.5) {\scriptsize $L_1$};
\node[final](L2) at (0.1,0) {\scriptsize $L_2$} edge [loop left] node {$a$} ();
\node[final](L3) at (1.5,0) {\scriptsize $L_3$} edge [loop right] node {$a$} ();
\node[exen](K1) at (4,1.5) {\scriptsize $K_1$};
\node[final](K2) at (4,0) {\scriptsize $K_2$} edge [loop right] node {$b$} () edge [loop left] node {$a$} ();

\draw[->](L1) to node [left] {$a$} (L2);
\draw[->](L1) to  node [right] {$b$} (L3);
\draw[->](L2) to [out=20,in=160] node [above] {$b$} (L3);
\draw[->](L3) to [out=200,in=340] node [below] {$b$} (L2);

\draw[->](K1) to [out=250,in=110] node [left] {$a$} (K2);
\draw[->](K1) to [out=290,in=70] node [right] {$b$} (K2);

\draw (6,2) -- +(0,-2.5);
\node [below] at (9, 2) {
$
  \begin{aligned}
\DD =  (& 
f(L_1, (0, L_2, L_3)), \, \\& f(L_2, (1, L_2, L_3)),\,  \\&f(L_3, (1, L_3, L_2)),  \\ 
& f(K_1, (0, K_2, K_2)), \,  \\& f(K_2, (1, K_2, K_2)) )
  \end{aligned}
$};
  \end{tikzpicture}
\caption{A binary automaton and its CPR representation.} 
\label{fig:Automaton}
\end{figure}
Once the automaton representation is fixed, one can translate the
definition of bisimulation into a single propagation rule:
\[
f(L, (L_t, L_a, L_b)), \, f(K, (K_t, K_a, K_b)),\, L\! \sim\! K \propag
L_t \! =\! K_t, \, L_a \!\sim\! K_a, \, L_b\! \sim\! K_b
\]
Using coinductive proof principle and \cref{theorem:fixpoint:cpr}, it
is simple to prove two states of the coalgebra represented by
$\DD$ accept or not the same language. For example, to conclude that
$L_1$ and $K_1$ recognize the same language while $L_1$ and $K_2$ do
not, one can prove the execution of $\hstate{3\cdot\prth{\DD, L_1 \sim
    K_1}}{\top}{\emptyset}$ never reaches inconsistent states, while
there are inconsistent derivations starting from
$\hstate{3\cdot\prth{\DD, L_1 \sim K_2}}{\top}{\emptyset}$.

\subsection{Coinductive solver for regular expressions}
\label{subsect:application:solver}

We have just shown that CPR yields a nice framework for coinductive
reasoning about coalgebra. Nonetheless the explicit representation of
an automaton by user defined constraints (as in
\cref{fig:Automaton}) would limit ourselves to finite state
automata. One simple idea to circumvent this limitation is to implictly represent
infinite states automata. For instance, one can represent
states using regular expressions and implement the computation of the
destructor function using derivatives \cite{Rutten98concur}.

Let us assume the following syntax for regular expressions:
\[
E ::= L \mid a \mid b  \mid  E,E \mid E^*
 \qquad L ::= \ezero \mid \esum{E}{L}
\]
where $a$ and $b$ are characters, $({}^*)$ and $(,)$ stand for
the Kleene star and the concatenation operators respectively, and a list
corresponds to the alternation % (\ie\ the set-union)
of its elements. Here follows a possible implementation of the
destructor function%
  \footnote{A complete version of program can be found in technical
    report version of the paper.}%
:
\[
\begin{array}{l}
f(\ezero, R)  \;\simplif\; R = (0, \ezero, \ezero).\\
f(a, R) \;\simplif\; R = (0, \esum{\eone}{}, \ezero).\\
f(b, R) \;\simplif\; R = (0, \ezero, \esum{\eone}{}).\\
f(\esum{E}{L}, R) \;\simplif\; R = (T, A, B), f(E, (E_t, E_a, E_b)), f(L, (L_t, L_a, L_b)), \\
\null \qquad \qquad  or(E_t, L_t, T), merge(E_a, L_a, A), merge(E_b, L_b, B).\\
f(E^*, R) \;\simplif\; R = (1, \esum{\prth{E_a,\esum{E^*}{}}}{}, \esum{(E_b,\esum{E^*}{})}{}), f(E, (\_, E_a, E_b)). \\
f((E, F), R) \;\simplif\;  f(E, (E_t, E_a, E_b)), f_\text{conc}(E_t, E_a, E_b, F, R).\\
f_\text{conc}(0, E_a, E_b, F, R) \;\simplif\; R = (0, \esum{(E_a,F)}{}, \esum{(E_b,F)}{}).\\
f_\text{conc}(1, E_a, E_b, F, R) \;\simplif\; R = (T, A, B), f(F, F_a, F_b)),\\ 
\null \qquad \qquad  merge(\esum{(E_a, F)}{}, F_a, A), merge(\esum{(E_b, F)}{}, F_b, B).
\end{array}
\]
where $or/3$ unifies its third argument with the Boolean disjunction
of its two first elements and $merge/3$ unifies its last argument
with the ordered union of the lists given as first arguments. Now one
can adapt the encoding of bisimulation given in the previous subsection
as follows:{%
\[
L\! \sim\!\!\hspace{0.5px} K \!\propag\! nonvar(L),\!\hspace{0.5px} nonvar(K) | f\!\hspace{0.5px} (L,\!\hspace{0.5px} (T,\!\hspace{0.5px}{} L_a,\!\hspace{0.5px}{} L_b)\!),\!\hspace{0.5px} f\!\hspace{0.5px} (K, (T,\!\hspace{0.5px}{} K_a,\!\hspace{0.5px}{} K_b)\!),\!\hspace{0.5px}{}
 L_a \!\sim\!\!\hspace{0.5px} K_a,\!\hspace{0.5px}{}  L_b\! \sim\!\!\hspace{0.5px} K_b. 
 \]%
}
We are now able to prove equality of regular expression using the
implementation of CHR hybrid semantics provided in
\cref{subsect:hybrid:implementation}. For example, the following state
leads to an irreducible consistent state. This implies thanks to the
Coinductive Proof Principle together with 
\cref{theorem:Hybrid:FullAbstraction} and
\cref{theorem:Hybrid:Implementation:Soundness} that the two regular
expressions recognize the same language/
\[
  \hstate{((b^*,a)^*,(a,b^*))^* \sim [[]^*, (a, [a,b]^*), ([a,b]^*,(a,(a,[a,b]^*)))]}{\top}{\emptyset}^\diamond
\]

It should be underlined that the use of simplification rules instead
of propagation rules for encoding the destructor function is essential
here in order to avoid rapid saturation of the memory by useless
constraints. Notice that on the one hand, the confluence of the set of
simplification rules needed by the theorems of \cref{sect:hybrid} can
be easily inferred, since the program is deterministic.  On the other
hand, termination of the the set of simplification rules, which is
required by \cref{theorem:Hybrid:Implementation:Completeness} can be
easily established by using, for instance, techniques we have
recently proposed for single-headed programs~\cite{HLH11ppdp}.

Of course, no one should be surprised that equivalence of regular
expressions is decidable. The interesting point here is that the
notion of coalgebra and bisimulation can be casted naturally in
CHR. Moreover, it is worth noticing the program given has the
properties required by a constraint solver.  Firstly the program is
effective, \ie\ it can actually prove or disprove if that two
expressions are equal. The first part of the claim can be proved using
Kleene theorem \cite{Rutten98concur} and the idempotency and
commutativity of the alternation, enforced here by the
$\textit{merge/3}$ predicate. The second part is direct by the
completeness \wrt\ failures. Secondly, the program is incremental: it
can deal with partially instanciated expressions by freezing some
computations provided without enough information. Last, but not 
least, one can easily add to the system new expressions (as for
instance $\epsilon$, $E?$, or $E^+$). For this purpose it is just
necessary to provide a new simplification rule for computing the
result of the  corresponding destructor function. For example, we can
add to the program the following rule and prove as previously that
$a^+$ and $(a, a^*)$ recognize the same language while $a^+$ and $a^*$
do not:
\[
f(K^+, R) \simplif R = (T,\, [K_a, (K_a,K^+)],\, [K_b, (K_b, K^+)]),\, f(K,\, (T,\, K_a,\, K_b)).
\]

\section{Conclusion}
\label{sect:conclusion}

We have defined a \lfp\ semantics for CHR simplification rules and a
\gfp\ semantics for CHR propagations rules, and proved both to be
accurate \wrt\ the logical reading of the rules. By using a
hybrid operational semantics with persistent constraints similar to
the one of Betz et al\@., we were able to characterize CHR programs
combining both simplification and propagation rules by a fixpoint semantics
without losing completeness \wrt\ to logical semantics. In
doing so, we have improved noticeably results about logical semantics
of CHR. 
Subsequently we proposed an implementation of this hybrid semantics
and showed it yields an elegant framework for programming with coinductive
reasoning.

The observation that non-termination of all derivations starting from
a given state ensures this latter to be in the coinductive semantics
of an hybrid program, suggests that the statics analysis of universal
non-termination of a CHR program might be worth investigating.  The
comparison of CHR to other coinductive programming frameworks such
that the circular coinductive rewriting of \citeN{GLR00ase} may
suggest it should be possible to improve completeness with respect to
success of the implementation proposed here.

\subsection*{Acknowledgements} 

{%\small

  The research leading to these results has received funding from the
  Madrid Regional Government under the CM project P2009/TIC/1465
  (PROMETIDOS), the Spanish Ministry of Science under the MEC project
  TIN-2008-05624 {\em DOVES}, and the European Seventh Framework
  Programme FP/2007-2013 under grant agreement 215483 (S-CUBE).

We are grateful to C\'esar S\'anchez, for interesting discussions
about coinduction, and to Santiago Zanella B\'eguelin, for referring
us to \citeANP{BC05lpar}'s work.
At the end, we would like to thank reviewers for their helpful and
constructive comments.

}

\bibliographystyle{acmtrans}
\bibliography{biblio}

\end{document}